\newenvironment{customproof}[1]{\par\noindent\textit{#1:}\quad}{\hfill$\blacksquare$}
\newtheorem{theorem}{Theorem}
\newtheorem{lemma}{Lemma}
\newtheorem{assumption*}{Assumption}
\newtheorem{stdassumption*}{Standing Assumption}
\newtheorem{prop}{Proposition}
\newtheorem{definition*}{Definition}
\DeclareMathOperator*{\argmin}{arg\,min}
\DeclareMathOperator*{\R}{\mathbb{R}}
\date{}
\title{On the Variational Interpretation of Mirror Play in Monotone Games }
\author{%
  Yunian Pan, Tao Li, and Quanyan Zhu
  \thanks{Authors are with the Department of Electrical and Computer Engineering, New York University, NY, 11201, USA {\tt yp1170, tl2636, qz494@nyu.edu}
}}
\begin{document}
 
\maketitle

\begin{abstract}
Mirror play (MP) is a well-accepted primal-dual multi-agent learning algorithm where all agents simultaneously implement mirror descent in a distributed fashion. The advantage of MP over vanilla gradient play lies in its usage of mirror maps that better exploit the geometry of decision domains. Despite extensive literature dedicated to the asymptotic convergence of MP to equilibrium, the understanding of the finite-time behavior of MP before reaching equilibrium is still rudimentary. To facilitate the study of MP's non-equilibrium performance, this work establishes an equivalence between MP's finite-time primal-dual path (mirror path) in monotone games and the closed-loop Nash equilibrium path of a finite-horizon differential game, referred to as mirror differential game (MDG). Our construction of MDG rests on the Brezis-Ekeland variational principle, and the stage cost functional for MDG is Fenchel coupling between MP's iterates and associated gradient updates. The variational interpretation of mirror path in static games as the equilibrium path in MDG holds in deterministic and stochastic cases. Such a variational interpretation translates the non-equilibrium studies of learning dynamics into a more tractable equilibrium analysis of dynamic games, as demonstrated in a case study on the Cournot game, where MP dynamics corresponds to a linear quadratic game.


\end{abstract}

\section{Introduction}
Mirror play (MP) refers to the multi-agent learning dynamics where all agents simultaneously run mirror descent, a generalization of gradient descent \cite{mertikopoulos2019learning}. The notable feature of MP is the usage of mirror maps that bridge the primal space of iterates and the dual space of gradients, creating gradient dynamics in the dual space, which tame the domain geometry for faster convergence \cite{nemirovskij1983problem}. As a popular primal-dual algorithm, MP also appears in the literature under the name of dual averaging \cite{mertikopoulos2019learning}, the follow-the-regularized-leader \cite{NIPS2006_1cfead99}, and EXP3 (see \cite{tao22confluence} for their connections), enjoying broad applications in learning-based decentralized control of network systems \cite{tao22confluence}.

Extensive efforts have been dedicated to MP's asymptotic behavior and its convergence to equilibrium in games, as briefly reviewed in ``Related Works.'' However, a comprehensive inspection of its finite-time iterates before reaching equilibrium, called the non-equilibrium path \cite{pan2023resilience}, remains largely uncharted, despite some early endeavors \cite{lei18md-hp, pan2023resilience, pan2023stochastic}. 

To complement the prior works on MP's non-equilibrium studies resting on martingale theory and concentration inequalities \cite{lei18md-hp, pan2023resilience, pan2023stochastic}, this work offers a novel dynamic-system perspective on MP dynamics in static monotone games, whereby the MP finite-time iterates are equivalent to the closed-loop Nash equilibrium path of a finite-horizon differential game (referred to as the mirror differential game) whose stage cost is given by the Brezis-Ekeland action functional \cite{brezis1976principe,tzen2023variational} and the terminal cost by the Bregman divergence between the last iterate to the Nash equilibrium of the monotone game.  A brief mathematical summary of our contribution is below.

\subsubsection*{\bf Contribution and Key Result} Given a set of players $\mathcal{N}$ in a monotone game (see Sec.\ref{subsec:monoton}), with the player indexed by $i\in \mathcal{N}$, employing strategy $y_i\in \mathcal{Y}_i$, and its payoff given by $\psi_i(y_i, y_{-i})$, the continuous-time MP dynamics is 
\begin{equation}
\label{eq:MP}
    \dot{x_i} = -\nabla_i \psi_i (\nabla\phi_i^*(x_i), \nabla\phi_{-i}^*(x_{-i})), x_i(0)=x_{i,0}, \tag{MP}
\end{equation}
where $\phi_i: \mathcal{Y}_i\rightarrow\R 
$ is of Legndre type (see ``Preliminaries''), and $\phi_i^*$ denotes the Fenchel conjugate. $\nabla\phi_i^*: \mathcal{X}_i\rightarrow \mathcal{Y}_i $ is the mirror map taking the dual iterate $x_i$ to the primal space $\mathcal{Y}_i$. 

Consider a differential game where players seek the equilibrium control $u_i(t)$ to minimize the cumulative cost functional subject to the system dynamics and others' strategies:
\begin{align}
\label{eq:MDG}
        J_i (x(0), \{u_i\}) &= \int_{t=0}^{T} c_i (x(t), u_i(t), u_{-i}(t)) \mathrm{d}t + q_i(x(T))\nonumber\\
        \text{s.t. }  \dot{x}_i(t)&=u_i(t), i\in \mathcal{N}, \tag{MDG}
\end{align}
where $c_i$ and $q_i$ are defined using the mirror map $\nabla\phi^*_i$ and the monotone game payoff $\psi_i$ to be introduced in Sec.~\ref{subsec:MP2MDG}. Hence, we refer to the dynamic game above as the mirror differential game \eqref{eq:MDG} with respect to \eqref{eq:MP}.

\textbf{Our contributions} are as follows. 1) We prove in Thm.~\ref{thm:main} that the closed-loop equilibrium strategy of \eqref{eq:MDG} amounts to the mirror decent update in \eqref{eq:MP}, i.e., $u_i^*(t)=  -\nabla_i \psi_i (\nabla\phi_i^*(x_i), \nabla\phi_{-i}^*(x_{-i}))$, extending the Brezi-Ekeland variational principle \cite{brezis1976principe,tzen2023variational} in gradient flow to multi-agent mirror descent dynamics and differential games. 2) Since the finite-time learning trajectory $\{x_i(t), 0\leq t \leq T\}$ of \eqref{eq:MP} is equivalent to the equilibrium path of \eqref{eq:MDG}, we provide in Thm.~\ref{thm:finite-time-lyapunov} a finite-time quantification of MP's iterates using the  Lyapunov argument in MDG. 3) We further extend the variational principle to stochastic mirror play (SMP), often arising from machine learning applications due to noisy gradient evaluations. Akin to the deterministic case, we establish in Thm.~\ref{thm:stochasticcase} a variational equivalence between SMP and the stochastic mirror differential game (SMDG). In summary, the proposed dynamical system viewpoint gives us more theoretical tools to understand the stability and optimality of the learning scheme, which demonstrates the potential of our proposed variational interpretation of MP.

\subsubsection*{\bf Related work}
The long-run behavior of MP in discrete-time settings has been studied in the context of continuous convex games \cite{zhou2017mirror}, 
finite games \cite{mertikopoulos2016learning,mertikopoulos2019learning},
and potential games \cite{pan2023stochastic,pan2023resilience}, to name a few.
Specifically, when the game is strictly monotone, it converges to Nash equilibria under a second-order criterion called \textit{variational stability} \cite{mertikopoulos2019learning,gao2020continuous, shutian23erm}.

Such convergence behavior has also been studied under a continuous-time setting, such as in \cite{gao2020continuous,gao2022continuous},  which essentially relies on Lyapunov analysis.
Despite the recent progress, little has been discussed on how MP trajectory could be interpreted from a control-theoretic perspective. 
Inspired by \cite{tzen2023variational,brezis1976principe}, we explore this option by formulating a class of differential games that induces the equivalent MP dynamics.


\subsubsection*{\bf Preliminaries}
For a convex set $\mathcal{X}$ in a finite-dimensional Euclidean space $\mathscr{E}= (\mathbb{R}^n , \|\cdot\|)$, we denote $\mathcal{NC}_{\mathcal{X}} (x) :=  \{v \in \mathbb{R}^{n} \mid \langle v, y-x \rangle \geq 0 \quad \forall y \in \mathcal{X}\}$ as its normal cone at $x\in \mathcal{X}$.
A function $f: \mathscr{E} \mapsto  (-\infty, +\infty] $ is proper if $\exists x \in \mathbb{R}^n$ such that $f(x) < \infty$. 
Suppose $f$ is convex, lower semicontinuous, and proper, then it is said to be 1) closed if its epigraph is closed; 2) \textit{essentially smooth} if $f$ is differentiable on $\mathrm{int}(\mathrm{dom} (f) )\neq \emptyset$, and $\lim_{n \to \infty }\| \nabla f(x_n) \| \to + \infty $ whenever
$x_n$ converge to some boundary points of $\mathrm{dom}(f) $; 3) \textit{essentially strictly convex} if $f$ is strictly convex on every convex subset of $ \mathrm{dom}( \partial f)$; 4) \textit{Legendre} if it is both essentially smooth and essentially strictly convex.

Given a function $f$, its Fenchel conjugate $f^*: \mathscr{E}^* \to (-\infty, + \infty]$ is defined by $f^* (y) =  \sup_{x \in \mathscr{E}} \{ \langle x, y \rangle - f(x)\}$, where $\mathscr{E}^*$ is the dual space. $f^*$ is convex and closed if $f$ is proper. 
Given $x\in \mathscr{E}$ and $y \in \mathscr{E}^*$, the Fenchel coupling is given by $\mathcal{FC}_f(x, y) = f(x) + f^*(y) - \langle x, y \rangle$. By Fenchel-Young inequality, $\mathcal{FC}(x,y) \geq 0$ with equality holds if and only if $y \in \partial f (x)$ for proper and convex $f$, or $x \in \partial f^* (y)$ given that $f^*$ is closed. Given $f$ being proper, closed, convex, and differentiable over $\mathrm{dom}(\partial f)$, we can define the Bregman divergence from point $y \in \mathrm{dom}(\partial f)$ to $ x \in \mathrm{dom}(f)$ as $D_f ( x, y) :=  f(x) - f(y) - \langle \nabla f(y), x - y \rangle $. 
Given $f$ being Legendre, its gradient map $\nabla f: \mathscr{E} \mapsto \mathscr{E}^*$ is bijective, with $(\nabla f)^{-1} = \nabla f^*$. 
As a consequence, the Bregman divergence $D_{f^*}(x^\prime, y^\prime) =D_{f}(y, x) $ if $x^\prime = \nabla f(x)$ and $y^\prime = \nabla f(y)$. 
An operator $\Psi (\cdot): \mathcal{X}  \mapsto \mathbb{R}^n$ is said to be 1) monotone if $\langle \Psi(x) - \Psi (x^\prime) , x - x^\prime \rangle  \geq 0 $ $\forall x, x^\prime \in \mathcal{X}$; 2) $\mu$-strongly monotone w.r.t. $D_f(\cdot, \cdot)$ if $\langle \Psi(x) - \Psi (x^\prime) , x - x^\prime \rangle  \geq \mu D_f(x, x^{\prime})$ $\forall x, x^\prime \in \mathcal{X}$ for some $\mu > 0$ and Legendre function $f$.

\section{Problem Formulation}

\subsection{Monotone Game}
\label{subsec:monoton}
A monotone game is a continuous game given by the tuple $\mathcal{G} = (\mathcal{N}, \{ \mathcal{Y}_i \}_{i \in \mathcal{N}}, \{ \psi_i \}_{i \in \mathcal{N}})$, where $\mathcal{N}$ is the set of players, $ \mathcal{Y}_i \subseteq \mathbb{R}^{n_i}$, $n_i$ are the dimensions for the strategy spaces of player $i$. We denote $\mathcal{Y}:=  \prod_{i \in \mathcal{N}}  \mathcal{Y}_i$ as the joint strategy space, $\mathcal{Y} \subseteq \prod_{i \in \mathcal{N}} \mathbb{R}^{n_i} =  \mathbb{R}^n$, $n = \sum_{i \in \mathcal{N}} n_i $.
Each player is equipped with their own real-valued cost functional $\psi_i : \mathcal{Y} \to \mathbb{R}$, $y \mapsto \psi_i(y)$, where $y = (y_i)_{i \in \mathcal{N}}$ is the strategy profile for all players, with $y_i \in \mathcal{Y}_i$ being the strategy of player $i$, $y_{-i} \in \mathcal{Y}_{-i}$ as the strategies of the rest.

For each player $i$, $\mathcal{Y}_i$ is nonempty, closed, and convex; $\psi_i $ is jointly continuous in $y = (y_i, y_{-i})$; for all $y_{-i} \in \mathcal{Y}_{-i}$, $\psi_i (\cdot; y_{-i})$ is a convex, continuously differentiable function in the player's strategy variable $y_i$.
Given the conditions above, the game $\mathcal{G}$ admits a pure strategy \textit{Nash equilibrium} \cite{bacsar1998dynamic} denoted by $\overline{y}$. The equilibrium strategy profile $\overline{y} = (\overline{y}_i )_{i \in \mathcal{N}}$ satisfies the following condition: $\psi_i (\overline{y}_i; \overline{y}_{-i} ) \leq \psi_i ( y_i; \overline{y}_{-i})$, for all $i \in \mathcal{N} , y_i \in \mathcal{Y}_i$.

At the equilibrium, no player can benefit from unilaterally deviating from their equilibrium strategy. The pure strategy Nash equilibrium, by the characterization from \cite{facchinei2003finite} and \cite{friesz2010dynamic}, can also be equivalently formulated as the solution of a \textit{variational inequality} $\mathrm{VI}(\mathcal{Y}, \Psi)$: 
\begin{equation*}
     \langle y - \bar{y},  \Psi (\overline{y}) \rangle  \geq 0 \quad  \forall  y \in \mathcal{Y},  
\end{equation*}
with  $\Psi (y) :=  ( \nabla_i \psi_i ( y_i, y_{-i }))_{i \in \mathcal{N}}$ being the \textit{pseudogradient}, which is the stacked vector of all the $\psi_i$'s partial gradient at $x$, with respect to their own strategy variable. 
The solution to the variational inequality $\overline{y}$ also admits that the pseudogradient at $\overline{y}$ is included by its normal cone: $ \Psi(\overline{y}) \in \mathcal{NC}_{\mathcal{Y}} (\bar{y})$.
The game $\mathcal{G}$ is said to be monotone if its pesudogradient $\Psi$ is a monotone operator.

\subsection{Mirror Play as a Dynamical System}

Consider the scenarios where players are repeatedly interacting in the game in a prescribed continuous time duration $t \in [0, T]$, $T \in \mathbb{R}_+$, aiming to learn the Nash equilibrium gradually. 
The learning updates are considered to happen between infinitesimal steps, as in \cite{mertikopoulos2017convergence}. 
The players  are endowed with different \textit{mirror maps} $\phi_i: \mathbb{R}^{n_i} \to \mathbb{R} \bigcup \{ +\infty\}$, which are assumed to be of Legendre type.
The \textit{aggregated mirror map} $\phi: \mathbb{R}^{n} \to \mathbb{R}  \bigcup \{ +\infty\}$ is then given by $\phi( y ) =  \sum_{i \in \mathcal{N}} \phi_i (y_i)$. 
At each time $t$, each player $i$ has access to his own partial gradient $\nabla_i \psi_i$ given all the players' strategy profile $y$. 
The mirror play (MP) dynamics require the players to map their own strategy, from the \textit{primal space} $\mathcal{Y}_i$, to the \textit{dual space} $\mathcal{X}_i := \mathbb{R}^{n_i}$, and map the information from the dual space back to the primal space.

Let the \textit{dual trajectory space} be $\overline{\mathcal{X}}  = \prod_{i \in \mathcal{N}}\overline{\mathcal{X}}_i$, whose elements are smooth trajectory functions $\{ x_i(t) \in \mathbb{R}^{n_i}, 0 \leq t \leq T\} $; let the \textit{dual control spaces} be $\mathcal{U} = \prod_{i \in \mathcal{N}} \mathcal{U}_i$  whose elements are smooth control functions $\{ u_i (t) \in \mathbb{R}^{n_i}, 0 \leq t \leq T\}$; denote $f_i(\cdot) : \mathbb{R}^{n_i} \times  \mathbb{R}^{n_1} \times \cdots \times \mathbb{R}^{n_N}\to \mathbb{R}^n$ as the players' updating/aggregation rules. Then, let the state each player's learning process can be written as a dynamical system with their \textit{states} being their \textit{dual variables}:
\begin{equation}
\label{eq:mirror-play-dynamics}
\begin{aligned}
     \dot{x}_i (t) & =   f_i ( x_i (t),  u_i (t), u_{-i}(t) ), \quad x_i(0) = x_{i,0}  \\ 
      y_i & =  \nabla \phi^*_i (x_i)  \quad \forall i \in \mathcal{N}, 
\end{aligned}
\end{equation} 
with $ (x_{i,0})_{i \in \mathcal{N}}$ being a priori.
Since $\nabla \phi^*_i $ is bijective, ($\phi_i$ is Legendre), the observation of $y_i$ and $x_i$ are interchangeable.
There is a plethora of information patterns to be considered: e.g., the open-loop information structure, i.e., each player only has access to the initial state $\{x_{i,0}\}_{i \in \mathcal{N}}$, hence the \textit{policy class} $\Gamma_i$ only contains constant functions $\{\gamma_i(t, x_{i,0}) \}$; or closed-loop information structure, i.e., at time $t \in [0,T]$, all players have access to $\{ x_i(s),  0 \leq s \leq t\}_{i \in \mathcal{N}}$. See \cite{bacsar1998dynamic, tao_info} for a more nuanced discussion about information structures.

Here, to better capture the MP learning framework, we only consider the closed-loop information structure, under which we prespecify $\Gamma_i$ to consist of measurable mappings $\gamma_i : [0,T] \times \overline{\mathcal{X}} \to \mathcal{U}_i$. 

\subsection{From Mirror Play to Mirror Differential Game}
\label{subsec:MP2MDG}
One archetype of mirror play is mirror descent (MD), which originally takes the following form, 
$ f_i(x_i, u_i (t), u_{-i}(t)) \triangleq  u_i (t), $
with the $u_i (t) = \nabla_i \psi_{i} (x_i(t), x_{-i}(t))$ being a feedback strategy determined by partial gradients with respect to dual states.
Other variants, as discussed in \cite{gao2020continuous}, might take consideration of a discounted factor $\eta$, as well as an exponential discounted aggregation, i.e., $f_i(x_i, u_i (t), u_{-i}(t)) \triangleq  \eta ( - x_i (t) + u_i (t)).$

Let the initial state $x_0 := [x^{\top}_{1,0}, \ldots, x^{\top}_{N,0}]^{\top}$, and $x(t): = [x_1^{\top}(t),  \ldots, x_N(t)]^{\top}, 0 \leq t \leq T$ be the stacked version of dual trajectories, we can write the individual differential equations into a more compact form \cite{mertikopoulos2019learning,mertikopoulos2017convergence}.
\begin{equation} \label{eq:diffcompact}
\begin{aligned}
         \dot{x} (t)  = u (t) , \ x (0) = x_0 ,  \quad y(t) = \Phi^*( x(t)),
\end{aligned}
\end{equation}
where $\Phi^*$ and $u$ are the stacked vectors of $\nabla \phi^*_i$ and $u_i$, respectively. 
Apparently, $u(t)$ is continuous in $t$ and uniformly Lipschitz in $x(\cdot)$, when the admissible controls $\gamma_i(t,x)$ are continuous in $t$ and uniformly Lipschitz in $x(\cdot)$.
Hence, the differential equation \eqref{eq:diffcompact} admits a unique continuous dual state trajectory for any admissible $\gamma_i$ such that $u_i (t) = \gamma_i(t,x)$.


The goal of this paper is to provide a variational interpretation of MP, by lifting the monotone game to the corresponding differential game. 
To this end, we define the individual cost functionals for the differential game. 
Let $\psi_i^*$ to be the Fenchel conjugate of function $y_i \mapsto \psi_i( y_i, y_{-i})$ given all the opponent strategy data $y_{-i}$, i.e., $\psi_i^*( u_i \mid y_{-i}) =  \sup_{y_i \in \mathcal{Y}_i} \left\{  \langle u_i, y_i \rangle -   \psi_i (y_i, y_{-i})   \right\}$. 
We equip with each player $i \in \mathcal{N}$ a  cumulative cost $J_i$:
\begin{equation}\label{eq:cumucost}
     J_i (x_0, \{u_i\}_{i \in \mathcal{N}}) = \int_{t=0}^{T} c_i (x(t), u_i(t), u_{-i}(t)) \mathrm{d}t + q_i(x(T)), 
\end{equation}
where the terminal cost is defined as the Bregman divergence from the equilibrium dual state to the terminal dual state:
\begin{equation*}
    q_i(x(T)) := D_{\phi^*_i} (x_i(T),  \overline{x}_i) = D_{\phi_i } (y_i(T),\overline{y}_i), 
\end{equation*}
and the stage cost is defined to be analogous to the Fenchel coupling between $x$ and $u_i$:
\begin{align*}
     c_i(x, u) :=& \psi_i(\nabla \phi_i^* (x_i), \nabla \phi_{-i}^* (x_{-i}))\\
     &+  \psi_i^*( -u_i \mid \nabla\phi_{-i}^* (x_{-i})) + \langle  u_i, \bar{y}_i \rangle,
\end{align*}
which resembles to the Brezis-Ekeland action functional \cite{brezis1976principe,tzen2023variational}, is continuously differentiable and convex in both $x_i$ and $u_i$.
This formulation implies that during the MP, as the dual state approaches equilibrium, both the stage cost and the cumulative cost diminish correspondingly. This property has been discussed in \cite{zhou2017mirror} as ``Fenchel conforming''.
Thus, the underlying differential game ( mirror differential game) exhibits a profound linkage with the monotone game $\mathcal{G}$.

To be distinguished from the Nash equilibrium (NE) of $\mathcal{G}$, the Nash equilibrium strategy set $\{\gamma_i^*\}_{i \in \mathcal{N}}$ under closed loop information structure is referred to as \textit{closed-loop-equilibrium policy} (CLE). 
The subsequent will be dedicated to showing that during the MP learning process of game $\mathcal{G}$, its non-equilibrium \textit{mirror path}, i.e., the finite-time MP iterates from the primal and dual space $\{x(t), y(t), 0\leq t \leq T\}$, follows a variational principle associated with the underlying mirror differential game (MDG).

\section{mirror path Characterization}

\subsection{The Variational Principle}
Define the Hamiltonian $H_i: [0,T] \times \mathbb{R}^n \times \mathbb{R}^n \times \mathbb{R}^n \to \mathbb{R}$:
\begin{equation*}
    H_i ( t,  p_i, x, u) \triangleq c_i (x(t), u(t)) + \langle p_i(t), u (t) \rangle,
\end{equation*}
where $p_i: [0,T] \to \mathbb{R}^n$ are costate functions to be specified later.
Now, from optimal control theory, every individual essentially aims to, if in particular, a CLE policy $\gamma^*$ exists, then the value functions, $V_i(t, x) \triangleq J_i(t, x, \gamma^*) =  \int_{\tau=t}^{T} c_i (x(\tau), \gamma^*(\tau, x(\tau))) \mathrm{d}\tau + q_i(x(T))$, must also exist, and can be found through solving the Hamilton-Jacobian-Bellman (HJB) equations. 

\begin{prop} 
 Under the closed-loop information structure, the CLE policy set $\{\gamma^*_i (t, x) = u_i^* (t) ; \ i \in \mathcal{N} \}$ generates a primal-dual  path $\{(x^*(t), y^*(t) ),  0 \leq t \leq T\}$ if there exist functions $V_i: [0,T] \times \mathbb{R}^n \to \mathbb{R}$ that satisfies the   HJB equations
  \begin{equation}\label{eq:hjbi}
  \begin{aligned}
        - \partial_t V_i (t,x) & =  \min_{u_i \in \mathcal{U}_i} H_i(t, \nabla_x V_i,  x, u_i, \gamma^*_{-i}),   \\
       =  \langle & \nabla_x   V_i(t, x), (\gamma^*_i, \gamma^*_{-i}) \rangle + c_i(x, \gamma^*_i, \gamma^*_{-i})    \\
      V_i(T,x)   & = q_i (x), \quad i \in \mathcal{N}.
  \end{aligned}
 \end{equation}
 Furthermore, the primal-dual path satisfies the following relations:
 \begin{equation}\label{eq:noneqpath}
 \begin{aligned}
       \dot{x}^*(t) & =  u^*(t), y^*(t) = \Phi^* (x^*(t)), \ x^*(0) = x_0 , \\
    \gamma^*_i(t, x) \equiv u_i^*(t) & = \argmin_{u_i \in \mathcal{U}_i} H_i (t, p_i, x^*, u_i, u^*_{-i}),  \\
     \dot{p}_i (t) & = - \nabla_x H_i (t, p_i, x^*, u^*) , \\ 
      p_i(T) & = \nabla_x q_i(x^*(T)) . 
 \end{aligned}
 \end{equation}
The corresponding cumulative cost for $i$ is $V_i(0, x_0)$.
\end{prop}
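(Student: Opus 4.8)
The plan is to read the statement as a \emph{verification theorem} for the closed-loop game and then pass from its dynamic-programming characterization to the Pontryagin canonical system through the identification $p_i(t) = \nabla_x V_i(t, x^*(t))$.

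First I would freeze the opponents at their candidate feedback $\gamma_{-i}^*$ and view player $i$'s problem as a single-agent optimal control problem with state $x \in \mathbb{R}^n$, control $u_i$, dynamics $\dot{x}_i = u_i$ and $\dot{x}_l = \gamma_l^*(t,x)$ for $l \neq i$, running cost $c_i$, and terminal cost $q_i$. Given a solution $V_i \in C^{1,2}$ of \eqref{eq:hjbi}, the standard verification argument applies: for any admissible $\gamma_i \in \Gamma_i$ with trajectory $x^{\gamma_i}$, I differentiate $t \mapsto V_i(t, x^{\gamma_i}(t))$, integrate on $[0,T]$, and combine the HJB inequality $-\partial_t V_i \le H_i(t, \nabla_x V_i, x, u_i, \gamma_{-i}^*)$ with $V_i(T,\cdot) = q_i$ to get $V_i(0, x_0) \le J_i(x_0, \gamma_i, \gamma_{-i}^*)$, with equality when $u_i = \gamma_i^*$ attains the minimum. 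Hence $\gamma_i^*$ is a best response, $\{\gamma_i^*\}_{i \in \mathcal{N}}$ is a CLE, and $V_i(0,x_0)$ is its value. Since the closed-loop field $\gamma^*(t,x) := (\gamma_i^*(t,x))_{i\in\mathcal{N}}$ is continuous in $t$ and Lipschitz in $x$, Picard--Lindel\"of gives a unique $x^*$ from $x_0$, and $y^*(t) = \Phi^*(x^*(t))$ yields the primal-dual path; the $\argmin$ relation in \eqref{eq:noneqpath} is the HJB minimizer evaluated along $x^*$ with $p_i = \nabla_x V_i$.

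Next I would recover the costate relations. The terminal condition is immediate from $V_i(T,\cdot) = q_i$, giving $p_i(T) = \nabla_x q_i(x^*(T))$. For the adjoint ODE, I differentiate the solved HJB $-\partial_t V_i = \langle \nabla_x V_i, \gamma^*\rangle + c_i(x, \gamma_i^*)$ in $x$, then differentiate $p_i(t) = \nabla_x V_i(t, x^*(t))$ in $t$ along $\dot{x}^* = \gamma^*$. The Hessian contributions $\nabla_x^2 V_i\,\gamma^*$ cancel, and the inner first-order condition $\nabla_{u_i} c_i + [\nabla_x V_i]_i = 0$ (the envelope identity) removes the feedback-gain term of player $i$'s own control, leaving $\dot{p}_i = -\nabla_x c_i(x^*, u_i^*) - \sum_{l \neq i}(D_x\gamma_l^*)^\top [\nabla_x V_i]_l$.

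The main obstacle is exactly that trailing sum. Under the closed-loop information structure the opponents' feedback $\gamma_{-i}^*$ carries genuine state dependence, whose Jacobian $D_x\gamma_l^*$ couples into player $i$'s adjoint; this is the structural gap between feedback and open-loop Nash equilibria, and it does not appear in the open-loop Hamiltonian form $-\nabla_x H_i(t, p_i, x^*, u^*)$ written in \eqref{eq:noneqpath}, where $u^*$ is frozen as a trajectory. To close the gap I would exploit the geometry of \eqref{eq:MDG}: the dynamics $\dot{x}_i = u_i$ are decoupled and linear, $c_i$ depends on controls only through $u_i$, and $q_i = D_{\phi^*_i}(x_i(T), \overline{x}_i)$ depends only on $x_i$, so $[\nabla_x V_i]_l(T) = 0$ for $l \neq i$. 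The crux is to show that these mixed blocks stay inert along the equilibrium path, or equivalently that the Pontryagin costate along the realized path coincides with $\nabla_x V_i$ without the feedback correction, so that the canonical system collapses to the stated open-loop form; the remaining ingredients (the verification inequality, uniqueness of $x^*$, and the envelope identity) are routine once $V_i$ is smooth and the minimizer in \eqref{eq:hjbi} is interior.
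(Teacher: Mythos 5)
The paper does not actually prove this proposition: it is stated as a standard verification/dynamic-programming characterization of closed-loop Nash equilibria (in the spirit of Ba\c{s}ar--Olsder), and the text immediately moves on to constructing the specific $V_i$ of \eqref{eq:valuefunc} and checking it via Lemma~\ref{lemma:nonneg}. So there is no in-paper argument to compare against line by line; what you have written is a reconstruction of the background result. Your first half --- freezing $\gamma_{-i}^*$, running the verification inequality $V_i(0,x_0)\le J_i(x_0,\gamma_i,\gamma_{-i}^*)$ with equality at the HJB minimizer, and invoking Picard--Lindel\"of for the closed-loop trajectory --- is exactly the standard argument and is fine.

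The part that deserves scrutiny is the costate system, and you have put your finger on the right issue: for feedback (as opposed to open-loop) equilibria, differentiating the solved HJB in $x$ produces the extra term $\sum_{l\neq i}(D_x\gamma_l^*)^\top[\nabla_x V_i]_l$, so the open-loop-looking relations in \eqref{eq:noneqpath} are not automatic. However, your proposed closure is not quite sufficient as stated: you infer $[\nabla_x V_i]_l(T)=0$ for $l\neq i$ from $q_i$ depending only on $x_i$, but the running cost $c_i$ \emph{does} depend on $x_{-i}$ (through $\nabla\phi_{-i}^*(x_{-i})$ both inside $\psi_i$ and in the conditioning of $\psi_i^*$), so for a generic solution $V_i$ of \eqref{eq:hjbi} the mixed blocks are sourced for $t<T$ and need not stay inert. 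What actually rescues the statement in this paper is that the proposition is only ever applied with the ansatz $V_i(t,x)=D_{\phi_i^*}(x_i,\overline{x}_i)$ of \eqref{eq:valuefunc}, which depends on $x_i$ alone; Lemma~\ref{lemma:nonneg} shows this ansatz solves \eqref{eq:hjbi} (the $x_{-i}$-dependence of $c_i$ is absorbed into the Fenchel coupling), whence $[\nabla_x V_i]_l\equiv 0$ for all $t$, the feedback-gain coupling vanishes identically, and $p_i=\nabla_x V_i$ satisfies the stated adjoint system with $p_i(T)=\nabla_x q_i(x^*(T))$. So you should either restrict the costate claim to that particular $V_i$, or carry the feedback-correction term explicitly; as a statement about arbitrary HJB solutions, the open-loop form of \eqref{eq:noneqpath} is genuinely stronger than what your verification argument delivers.
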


So far, we already know the boundary conditions $V_i(T,x(T)) = D_{\phi^*_i} (x_i(T), \overline{x}_i)$. We might as well just assume that 
\begin{equation}\label{eq:valuefunc}
    V_i (t, x)= V_i (x(t)) =  D_{\phi^*_i} ( x_i (t), \overline{x}_i).
\end{equation} Then, Lemma \ref{lemma:nonneg} indicates that $V_i$ indeed solves the HJB equations, further implying that the primal-dual path by the CLE policy corresponds to the finite-time mirror path of the mirror play dynamics in \eqref{eq:mirror-play-dynamics}, as stated in Thm.~\ref{thm:main}.  

\begin{lemma}\label{lemma:nonneg}
    The following inequality holds for the value functions $V_i$ defined by \eqref{eq:valuefunc}, $i \in \mathcal{N}$: 
    \begin{equation}
         \langle  \nabla_x   V_i(t, x), (u_i, \gamma^*_{-i}) \rangle + c_i(x, u_i, \gamma^*_{-i})  \geq 0,
    \end{equation}
    with the equality attained if and only if  $u_i =  - \nabla_i \psi_i (\nabla \phi_i^* (x_i), \nabla \phi_{-i}^* (x_{-i}))$ is the MD policy.
\end{lemma}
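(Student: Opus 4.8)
The plan is to evaluate the left-hand side in closed form using the ansatz \eqref{eq:valuefunc} for $V_i$ and then to recognize the resulting expression as a Fenchel coupling, so that the claimed inequality and its equality condition follow directly from the Fenchel--Young inequality recorded in the Preliminaries. First I would differentiate $V_i$. Since $V_i(t,x) = D_{\phi^*_i}(x_i, \overline{x}_i)$ depends on the joint state $x$ only through the block $x_i$, the gradients in the other blocks vanish, $\nabla_{x_{-i}} V_i = 0$, while the Bregman-divergence definition gives
\[
\nabla_{x_i} V_i(t,x) = \nabla\phi_i^*(x_i) - \nabla\phi_i^*(\overline{x}_i) = \nabla\phi_i^*(x_i) - \overline{y}_i,
\]
where the last equality uses the defining relation $\overline{y}_i = \nabla\phi_i^*(\overline{x}_i)$ between the equilibrium dual and primal states. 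Because $\nabla_{x_{-i}} V_i = 0$, the term $\langle \nabla_x V_i, (u_i, \gamma^*_{-i})\rangle$ collapses to the single block $\langle \nabla\phi_i^*(x_i) - \overline{y}_i, u_i\rangle$.

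Adding the stage cost $c_i$ from Sec.~\ref{subsec:MP2MDG}, the two boundary terms $-\langle\overline{y}_i, u_i\rangle$ and $+\langle u_i, \overline{y}_i\rangle$ cancel, and writing $y_i := \nabla\phi_i^*(x_i)$ and $y_{-i} := \nabla\phi_{-i}^*(x_{-i})$, the left-hand side reduces to
\[
\langle y_i, u_i\rangle + \psi_i(y_i, y_{-i}) + \psi_i^*(-u_i \mid y_{-i}).
\]
The key observation is that, for fixed $y_{-i}$, this is exactly the Fenchel coupling $\mathcal{FC}_{\psi_i(\cdot\mid y_{-i})}(y_i, -u_i)$ of the convex function $y_i\mapsto\psi_i(y_i, y_{-i})$ evaluated at the dual point $-u_i$; the sign in the conjugate argument is precisely what makes the $\langle y_i, u_i\rangle$ term play the role of $-\langle y_i, -u_i\rangle$ in the coupling. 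Invoking $\mathcal{FC}\geq 0$ then yields the desired inequality.

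For the equality case I would use the characterization of the Fenchel--Young bound: $\mathcal{FC}_{\psi_i(\cdot\mid y_{-i})}(y_i, -u_i) = 0$ if and only if $-u_i \in \partial_{y_i}\psi_i(y_i, y_{-i})$. Since the monotone-game assumptions make $\psi_i(\cdot, y_{-i})$ convex and continuously differentiable in its own argument, the subdifferential is the singleton $\{\nabla_i\psi_i(y_i, y_{-i})\}$, so equality holds iff $-u_i = \nabla_i\psi_i(y_i, y_{-i})$, i.e. $u_i = -\nabla_i\psi_i(\nabla\phi_i^*(x_i), \nabla\phi_{-i}^*(x_{-i}))$, which is the MD policy. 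I expect the only real obstacle to be the bookkeeping of signs and blocks: one must verify that the $\overline{y}_i$ terms cancel cleanly, that differentiation of the Bregman divergence is taken in its first slot $x_i$, and that the conjugate is formed at $-u_i$ rather than $u_i$, since it is this matching that exposes the Fenchel-coupling structure and delivers the MD policy as the unique minimizer.
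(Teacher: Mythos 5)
Your proposal is correct and follows essentially the same route as the paper's proof: compute $\nabla_x V_i$ blockwise (only the $x_i$ block survives), observe the cancellation of the $\overline{y}_i$ terms, and identify the remainder as the Fenchel coupling $\mathcal{FC}_{\psi_i(\cdot,\, y_{-i})}(y_i, -u_i)\geq 0$, with equality characterized by Fenchel--Young and the differentiability of $\psi_i(\cdot, y_{-i})$. Your explicit remark that the subdifferential is a singleton makes the ``only if'' direction of the equality case slightly more careful than the paper's one-line statement.
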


\begin{proof}
   Since $V_i$ does not depend on $x_{-i}$, $\nabla_x V_i (t, x) $ has $0$ entries on the dimensions corresponding to those of $x_{-i}$, i.e., $$
     \langle \nabla_x   V_i(t, x), (u_i, \gamma^*_{-i}) \rangle =  \langle \nabla \phi^*_i (x_i) -  \nabla\phi_i^* (\overline{ x}_i ) ,  u_i \rangle .
   $$
   Then, due to the fact that $\overline{y}_i = \nabla\phi^*_i(\overline{x}_i)$, and $y_i = \nabla\phi^*_i(x_i) $ simple calculation yields,
   \begin{align*}
       & \quad \langle  \nabla_x   V_i(t, x), (u_i, \gamma^*_{-i}) \rangle + c_i(x, u_i, \gamma^*_{-i}) \\
        & = \psi_i(\nabla \phi_i^* (x_i), \nabla \phi_{-i}^* (x_{-i})) +  \psi_i^*( -u_i \mid \nabla \phi_{-i}^* (x_{-i})) + \langle  u_i, \bar{y}_i \rangle \\ & \quad +   \langle  \nabla \phi_i^* (x_i) -  \overline{y}_i,  u_i \rangle  \\
        & = \mathcal{FC}_{\psi_i(\cdot, \nabla \phi_{-i}^* (x_{-i}))}(  y_i ,  - u_i ) \geq 0 ,
   \end{align*}
   where the equality can be attained if and only if $-u_i =  \nabla_i \psi_i( y_i, \nabla \phi_{-i}^* (x_{-i}))$.
   
\end{proof}

\begin{theorem}\label{thm:main}

Consider the MDG defined by \eqref{eq:diffcompact} and \eqref{eq:cumucost},  we have the following for each $i \in \mathcal{N}$: 
    \begin{itemize}
        \item[i)] For any $t\in [0,T]$, the following holds:
        \begin{equation}
              \int_{0}^{t} c_{i}(x(\tau), u(\tau)) d \tau \geq V_{i}(x_0)-V_{i}(x(t)) .
        \end{equation}
        Further, for any policy $u \in \mathcal{U}$, $J_i(x_0, u_i, u_{-i}) \geq V_i (x_0)$.
        \item[ii)] For any priori $x_0$, the primal-dual path $\{(x^*(t), y^* (t)); \ 0 \leq t \leq T \}$ is generated by the CLE policy set $\{\gamma^*_i (t, x) = u_i^* (t) =  - \nabla_i \psi_i( \nabla\phi^*_i(x_i), \nabla \phi_{-i}^* (x_{-i}))  ; \ i \in \mathcal{N}  \}$, which satisfies the relations \eqref{eq:noneqpath}, where the costate dynamics follows:
        \begin{equation*}
             \begin{aligned}
                 \dot{p}_i(t) & =  -  [ 0, \ldots, (\nabla_i \psi_i (y))^{\top}, \ldots, 0]^{\top} \\
                  p_i(T) & = [ 0, \ldots, (\nabla_i \psi_i (y) - \nabla_i \psi_i (\overline{y}))^{\top}, \ldots, 0]^{\top},
             \end{aligned}
        \end{equation*}
        and the cumulative cost $J_i(x_0, u^*_i, u^*_{-i}) = V_i(x_0)$.
        
        \item[iii)] When the time horizon $T\to+\infty$, then, the closed-loop system $\dot{x}^*(t) = u^*(t)$ is asymptotically stable, with $V(x) := \sum_{i \in \mathcal{N}} V_i (x(t))$ being a global Lyapunov function.  
    \end{itemize}
\end{theorem}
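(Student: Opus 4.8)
The plan is to dispatch the three items in order, using Lemma~\ref{lemma:nonneg} as the workhorse for i) and ii), and the monotonicity of $\Psi$ for iii). The single observation that unlocks i) and ii) is that the ansatz \eqref{eq:valuefunc} makes $V_i$ depend on $x_i$ only, so $\nabla_x V_i$ is supported on the $i$-block and $\langle \nabla_x V_i, u\rangle = \langle \nabla\phi_i^*(x_i)-\overline{y}_i, u_i\rangle$ is insensitive to the opponents' controls $u_{-i}$; since the stage cost $c_i(x,u)$ also contains $u$ only through $u_i$, the inequality of Lemma~\ref{lemma:nonneg} in fact holds for \emph{every} $u_{-i}$, not merely for $\gamma^*_{-i}$. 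For part i), along any admissible trajectory $x(\cdot)$ driven by $u$ the chain rule gives $\frac{\mathrm{d}}{\mathrm{d}t}V_i(x(t)) = \langle \nabla_x V_i(x(t)), u(t)\rangle \geq -c_i(x(t),u(t))$; integrating on $[0,t]$ yields $\int_0^t c_i\,\mathrm{d}\tau \geq V_i(x_0)-V_i(x(t))$, and specializing $t=T$ together with $V_i(x(T))=q_i(x(T))$ (from \eqref{eq:valuefunc} and the definition of $q_i$) turns this into $J_i(x_0,u_i,u_{-i}) = \int_0^T c_i\,\mathrm{d}\tau + q_i(x(T)) \geq V_i(x_0)$.

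For part ii), invoke the equality clause of Lemma~\ref{lemma:nonneg}: with $u_i = u_i^* = -\nabla_i\psi_i(\nabla\phi_i^*(x_i),\nabla\phi_{-i}^*(x_{-i}))$ the integrand inequality becomes the identity $\frac{\mathrm{d}}{\mathrm{d}t}V_i = -c_i$, so the estimate of part i) collapses to $J_i(x_0,u_i^*,u_{-i}) = V_i(x_0)$ for \emph{every} $u_{-i}$. Since part i) already shows $V_i(x_0)$ is a uniform lower bound, $u_i^*$ is a best response to any $u_{-i}$, in particular to $u_{-i}^*$; hence $\{u_i^*\}_{i\in\mathcal{N}}$ is a closed-loop equilibrium with value $V_i(x_0)=V_i(0,x_0)$, so $V_i$ solves the HJB system \eqref{eq:hjbi} of the Proposition. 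To read off the adjoint relations \eqref{eq:noneqpath}, set $p_i(t) = \nabla_x V_i(x^*(t))$, which is supported on the $i$-block with entry $\nabla\phi_i^*(x_i)-\overline{y}_i$; differentiating along $\dot{x}^*=u^*$ and using $\nabla_x q_i = \nabla_x V_i|_{t=T}$ produces the stated costate ODE and terminal condition.

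For part iii), take $V = \sum_{i\in\mathcal{N}} V_i = \sum_i D_{\phi^*_i}(x_i,\overline{x}_i)$ as the Lyapunov candidate. Because each $\phi_i^*$ is Legendre, $V\geq 0$ with $V(x)=0$ iff $x=\overline{x}$. Along the closed-loop flow $\dot{x}^* = u^* = -\Psi(\Phi^*(x^*))$ one computes $\dot{V} = \sum_i\langle \nabla\phi_i^*(x_i)-\overline{y}_i, u_i^*\rangle = -\langle y-\overline{y}, \Psi(y)\rangle$ with $y=\Phi^*(x)$. Splitting $\langle y-\overline{y},\Psi(y)\rangle = \langle y-\overline{y},\Psi(y)-\Psi(\overline{y})\rangle + \langle y-\overline{y},\Psi(\overline{y})\rangle$, the first term is nonnegative by monotonicity of $\Psi$ and the second is nonnegative by the variational-inequality characterization $\Psi(\overline{y})\in\mathcal{NC}_{\mathcal{Y}}(\overline{y})$; hence $\dot{V}\leq 0$ and $\overline{x}$ is Lyapunov stable. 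Promoting this to (global) asymptotic stability follows either by assuming $\Psi$ strictly (resp. $\mu$-strongly) monotone, which makes $\dot{V}<0$ off equilibrium, or by LaSalle's invariance principle identifying $\{\overline{x}\}$ as the largest invariant set in $\{\dot{V}=0\}$, with radial unboundedness of $V$ (again from the Legendre structure) securing the global claim.

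I expect the delicate points to lie in part ii) and in the stability upgrade of iii). In ii) the genuinely nontrivial step is arguing that the pointwise Hamiltonian minimization of Lemma~\ref{lemma:nonneg} certifies a true closed-loop Nash equilibrium rather than a single-agent optimum — the decoupling observation above is what makes this legitimate — and the feedback costate bookkeeping, where the state dependence of $\gamma^*_{-i}$ must be handled so that the adjoint collapses to the stated $i$-block form (which is cleanest when read through $p_i=\nabla_x V_i$). In iii) the main obstacle is that plain monotonicity only delivers $\dot{V}\leq 0$; obtaining the asserted asymptotic stability genuinely requires the extra strict/strong-monotonicity hypothesis or a LaSalle argument together with radial unboundedness of $V$.
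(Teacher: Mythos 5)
Your proposal is correct and follows essentially the same route as the paper: parts i) and ii) rest on the chain rule plus Lemma~\ref{lemma:nonneg} (the paper likewise derives $\tfrac{\mathrm{d}}{\mathrm{d}t}V_i=\langle\nabla_x V_i,u\rangle\geq -c_i$ with equality exactly at the mirror-descent control), and part iii) uses the identical Lyapunov function $V=\sum_{i}D_{\phi_i^*}(x_i,\overline{x}_i)$ with the same split of $\langle y-\overline{y},\Psi(y)\rangle$ into a monotonicity term and a variational-inequality term. Two of your refinements go beyond what the paper writes down and are worth keeping: the explicit observation that $\nabla_x V_i$ and $c_i$ depend on the controls only through $u_i$, which is what legitimately upgrades the pointwise Hamiltonian minimization into a Nash certificate against arbitrary $u_{-i}$; and your caveat in iii) that plain monotonicity only yields $\dot V\leq 0$ --- the paper asserts asymptotic stability from that same computation, which actually fails for merely monotone games (e.g., unconstrained bilinear zero-sum play, whose mirror trajectories are periodic orbits), so the extra strict/strong-monotonicity or LaSalle hypothesis you flag is genuinely needed.
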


\begin{proof}
i): Consider an arbitrary policy profile $u \in \mathcal{U}$. By standard analysis:
    \begin{align*}
        V_i (x (t)) - V_i (x_0) & = \int_{0}^t \frac{\mathrm{d}}{\mathrm{d}s} V_i(x(s))ds \\
        =  \int_{0}^t  \langle  \nabla_x   V_i(t, x), (u_i, u_{-i}) \rangle
        & \geq -\int_{0}^t c_i (x(s), u(s)) ds, 
    \end{align*}
    the last inequality is due to similar calculations in the proof of Lemma \ref{lemma:nonneg}, rearranging the terms yields the inequality. 
    Hence,  
    $J_i (x_0, u_i, u_{-i}) = \int_{0}^T c_i (x(s), u(s)) ds + V_i(x(T)) \geq V_i(x_0)$. 

ii): By Lemma \ref{lemma:nonneg}, for all $i \in \mathcal{N}$, when the only minimizers $\gamma_i^*$ are attained, $V_i $ satisfy the HJB equation \eqref{eq:hjbi}.
Thus, the costate dynamics can be obtained as follows: 
\begin{align*}
    \dot{p}_i(t) & =  -\nabla_x V_i (x(t)) \\
     p_i(T) & = \nabla_x V_i(x(T)) \quad   i \in \mathcal{N}.
\end{align*}
When $\gamma^*_i$ is attained, the equality in i) is attained as well, thus $J_i(x_0, u^*_i, u^*_{-i}) = V_i(x_0)$. 

iii): 
The mirror path is generated by the closed-loop system  $\dot{x}(t) = (- \nabla_i \psi_i (\nabla \phi_i^* (x_i), \nabla \phi_{-i}^* (x_{-i})) )_{i \in \mathcal{N}}$. 
Now, we examine the function $V(x)$, which satisfies:
\begin{itemize}
    \item $V(\overline{x}) = \sum_{i \in \mathcal{N}} V_i(\overline{x}_i) = 0$; 
    \item $V(x)$ is positive definite for $x \neq \bar{x}$, twice continuously differentiable and strictly convex due to the definition of Bregman divergence; 
    \item coercive due to the fact that  $\phi^*_i$ are essentially strictly convex. 
\end{itemize} 
Along the path, by Lemma \ref{lemma:nonneg} we have 
\begin{equation}\label{eq:c+dv}
    c_i (x, u^*) + \frac{\mathrm{d}}{\mathrm{d}t}V_i (x) = 0 . 
\end{equation} 
Thus, 
\begin{align*}
     & \quad \quad  \frac{\mathrm{d}}{\mathrm{d}t} V (x(t))  =  \dot{V} (x(t), - \nabla \psi (y(t))) \\
              & = - \sum_{i \in \mathcal{N}} c_i(x(t), - \nabla_i \psi_i (y(t))) \\
              &  =  \sum_{i \in \mathcal{N}}   \langle \nabla_i \psi_i (y(t))), \overline{y}_i\rangle - \psi_i^*(\nabla_i \psi_i (y(t)) ) - \psi_i ( y(t))  \\ & 
            =  \sum_{i\in \mathcal{N}}  - \mathcal{FC}_{\psi(\cdot, y_{-i})} ( y_i, \nabla_i \psi_i(y))+ \langle \nabla_i \psi_i(\overline{y}), \overline{y}_i - y_i(t)\rangle 
            \\ & \leq  \sum_{i\in \mathcal{N}}  \langle \nabla_i \psi_i(\overline{y}), \overline{y}_i - y_i(t)\rangle \leq 0,
\end{align*}
 where the second last inequality is due to Fenchel-Young inequality and the last inequality is due to the monotonicity of game $\mathcal{G}$ and the definition of $\overline{y}$.
    
\end{proof}

\subsection{Finite-time Quantification}

Now we are ready to quantify the non-asymptotic ``efficiency'' of the mirror path from the variational perspective. 
The convergence analysis of MP-based learning in games has been studied before, such as in \cite{krichene2015convergence} for population games in discrete time;  
and a similar dynamical-system-based approach can be found in \cite{gao2020continuous}. 
Here, we aim to show the potential for variational interpretation to serve as a general framework to obtain performance guarantees.

\begin{theorem}
\label{thm:finite-time-lyapunov}
    Let $\{(x(t), y(t)); \ 0 \leq t \leq T \}$ be generated by Mirror play \eqref{eq:mirror-play-dynamics} (equivalently, the CLE policy of MDG) $\{u^*(t) ; 0 \leq t \leq T\}$. Then, it holds that 
    \begin{equation}\label{eq:monotonconv}
      \langle \Psi(y),  \bar{y}_{i, [0,T]} - \bar{y}  \rangle \leq \frac{1}{T} \sum_{i \in \mathcal{N}} D_{\phi_i} (\overline{y}_i, y_{i,0}), 
    \end{equation}
    which indicates that $ \bar{y}_{i, [0,T]}$ converges to some $y$ such that $\Psi(y) \in \mathcal{NC}_{\mathcal{Y}}(y)$ as $T \to \infty$. 
    In addition, if $\mathcal{G}$ is $\mu$-strongly monotone with respect to $D_{\phi}(\cdot, \cdot)$ where $\phi$ is the aggregated mirror map, then, $V (x(t)) \leq e^{ - \mu t} V(x_0)$. 
    In this case, the closed-loop system is exponentially stable as $T \to \infty$.
\end{theorem}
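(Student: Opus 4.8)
The plan is to derive both assertions from a single energy identity for the Lyapunov function $V(x)=\sum_{i\in\mathcal{N}}V_i(x)=\sum_{i\in\mathcal{N}}D_{\phi_i}(\overline{y}_i,y_i)$ already used in Thm~\ref{thm:main}~iii). Along the closed-loop mirror path $\dot{x}_i=u_i^*=-\nabla_i\psi_i(y)$, and using $\nabla_{x_i}V_i=\nabla\phi_i^*(x_i)-\nabla\phi_i^*(\overline{x}_i)=y_i-\overline{y}_i$ as in the proof of Lemma~\ref{lemma:nonneg}, differentiation gives $\frac{\mathrm{d}}{\mathrm{d}t}V_i(x(t))=\langle y_i-\overline{y}_i,\dot{x}_i\rangle=-\langle\nabla_i\psi_i(y),y_i-\overline{y}_i\rangle$, which is precisely identity \eqref{eq:c+dv}. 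Summing over players yields the master identity
\begin{equation*}
\frac{\mathrm{d}}{\mathrm{d}t}V(x(t))=-\langle\Psi(y(t)),\,y(t)-\overline{y}\rangle,
\end{equation*}
and everything reduces to extracting the two bounds from it.

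For the ergodic bound \eqref{eq:monotonconv}, I would integrate the master identity over $[0,T]$ and discard the nonnegative terminal term $V(x(T))\ge 0$, obtaining $\int_0^T\langle\Psi(y(\tau)),y(\tau)-\overline{y}\rangle\,\mathrm{d}\tau\le V(x_0)=\sum_{i\in\mathcal{N}}D_{\phi_i}(\overline{y}_i,y_{i,0})$. The integrand still carries the running pseudogradient $\Psi(y(\tau))$, so the next step converts it to the fixed pseudogradient at the equilibrium via monotonicity, $\langle\Psi(y(\tau)),y(\tau)-\overline{y}\rangle\ge\langle\Psi(\overline{y}),y(\tau)-\overline{y}\rangle$. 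Since $\langle\Psi(\overline{y}),\cdot\rangle$ is linear, the time average commutes with it, so with the ergodic average $\overline{y}_{[0,T]}:=\frac{1}{T}\int_0^T y(\tau)\,\mathrm{d}\tau$ I obtain $\langle\Psi(\overline{y}),\overline{y}_{[0,T]}-\overline{y}\rangle\le\frac{1}{T}\sum_{i\in\mathcal{N}}D_{\phi_i}(\overline{y}_i,y_{i,0})$, which is \eqref{eq:monotonconv}. Because $\overline{y}_{[0,T]}\in\mathcal{Y}$ by convexity and $\overline{y}$ solves the variational inequality, the left-hand side is also $\ge 0$, so the gap is squeezed to zero at rate $O(1/T)$; any limit point $y$ of $\overline{y}_{[0,T]}$ then satisfies $\Psi(y)\in\mathcal{NC}_{\mathcal{Y}}(y)$.

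For the exponential rate under $\mu$-strong monotonicity I would bound the master identity pointwise rather than integrate. Splitting $\langle\Psi(y),y-\overline{y}\rangle=\langle\Psi(y)-\Psi(\overline{y}),y-\overline{y}\rangle+\langle\Psi(\overline{y}),y-\overline{y}\rangle$, dropping the second term (nonnegative by the VI characterization of $\overline{y}$), and invoking $\mu$-strong monotonicity w.r.t. $D_{\phi}$ gives $\langle\Psi(y)-\Psi(\overline{y}),y-\overline{y}\rangle\ge\mu D_{\phi}(\overline{y},y)=\mu V(x)$, hence $\langle\Psi(y),y-\overline{y}\rangle\ge\mu V(x)$. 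The master identity then becomes the differential inequality $\frac{\mathrm{d}}{\mathrm{d}t}V(x(t))\le-\mu V(x(t))$, and the comparison (Gr\"onwall) lemma delivers $V(x(t))\le e^{-\mu t}V(x_0)$. Exponential stability as $T\to\infty$ follows from the Lyapunov properties of $V$ recorded in Thm~\ref{thm:main}~iii): $V$ is positive definite, coercive, and vanishes only at $\overline{x}$.

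The step I expect to require the most care is matching the orientation of the Bregman divergence. The Lyapunov function is $V=\sum_{i\in\mathcal{N}}D_{\phi_i}(\overline{y}_i,y_i)$ (base point $y_i$), whereas the definition of $\mu$-strong monotonicity naturally produces $D_{\phi}(y,\overline{y})$ (base point $\overline{y}$), and $D_{\phi}$ is not symmetric. Obtaining $\dot V\le-\mu V$ therefore hinges on the strong-monotonicity inequality being stated, or equivalently restated, with the orientation $D_{\phi}(\overline{y},y)$ that matches $V$; I would fix this convention explicitly, using the dual--primal identity $D_{\phi_i^*}(x_i,\overline{x}_i)=D_{\phi_i}(\overline{y}_i,y_i)$ from the Preliminaries to align the two. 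The only other subtlety, in Part~i), is that $\Psi(y(\tau))$ cannot be passed through the time average directly, which is exactly why the monotonicity substitution is needed and why the reference point in \eqref{eq:monotonconv} is the equilibrium pseudogradient $\Psi(\overline{y})$.
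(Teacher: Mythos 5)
Your proof is correct and follows essentially the same route as the paper's: both rest on the identity $c_i(x,u^*)=-\tfrac{\mathrm{d}}{\mathrm{d}t}V_i(x)$, integrate it and drop the nonnegative terminal and monotonicity-gap terms to get the ergodic bound with reference point $\Psi(\overline{y})$, and then combine the same pointwise splitting with $\mu$-strong monotonicity and Gr\"onwall for the exponential rate. Your explicit attention to the orientation of the Bregman divergence (matching $D_{\phi}(\overline{y},y)$ to $V$ via $D_{\phi_i^*}(x_i,\overline{x}_i)=D_{\phi_i}(\overline{y}_i,y_i)$) is a point the paper passes over silently, but it does not change the argument.
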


\begin{proof}
   i):
   Using the fact \eqref{eq:c+dv}, we have that for each player $i \in \mathcal{N}$, the path stage cost satisfies
    \begin{align*}
       c_i (x, u^*) & = - \frac{\mathrm{d}}{\mathrm{d}t}V_i (x)  = - \frac{\mathrm{d}}{\mathrm{d}t} D_{\phi^*_i}(x_i(t), \bar{x}_i)\\ & = \langle \nabla_i \psi_i ( \nabla \phi^*_i (x(t))) , \nabla\phi^*_i(x_i(t))  - \nabla \phi^*_i(\overline{x}_i)  \rangle  \\
     & = \langle  \nabla_i \psi_i (y(t)), y_i (t) - \overline{y}_i\rangle  
    \\ & = \langle  \nabla_i \psi_i (\overline{y}), y_i (t) - \overline{y}_i\rangle \\ & \quad  + \langle  \nabla_i \psi_i (y(t)) -  \nabla_i \psi_i (\overline{y}), y_i (t) - \overline{y}_i\rangle 
    .
    \end{align*}
    where we use the convexity of $\psi_i (\cdot, y_{-i})$, 
   Due to Theorem \ref{thm:main}, we have for each $i \in \mathcal{N}$,
    \begin{align*}
          V (x_0) & =   \sum_{i \in \mathcal{N}} D_{\phi^*_i} (x_i(T), \overline{x}_i )  +  \int_{t=0}^T  \sum_{i \in \mathcal{N}} c_i (x(t), u^*(t)) \mathrm{d}t 
          \\ & \geq   \int_{t=0}^T  \sum_{i \in \mathcal{N}} \langle  \nabla_i \psi_i (\overline{y}), y_i (t) - \overline{y}_i\rangle \mathrm{d}t ,
    \end{align*}
    where the inequality is due to the nonnegativity of Bregman divergence and the monotonicity of $\mathcal{G}$.
    
    Then, let $\bar{y}_{i, [0,T]} = \frac{1}{T} \int_{t=0}^T y_i(t) \mathrm{d}t$, $i \in \mathcal{N}$ be the average strategy profile, we arrive at:
    \begin{align*}
          & \quad \sum_{ i \in \mathcal{N}} \langle  \nabla_i \psi_i (\overline{y}), \bar{y}_{i, [0,T]} - \overline{y}_i\rangle
          \\ & =  \frac{1}{T} \int_{t=0}^T  \sum_{i \in \mathcal{N}}\langle  \nabla_i \psi_i (\overline{y}), y_i (t) - \overline{y}_i\rangle \mathrm{d}t  \\
        & \leq  \frac{1}{T} \sum_{i \in \mathcal{N}} D_{\phi_i} (\overline{y}_i, y_{i,0}) .
    \end{align*}
    By the characterization of variational inequality, $\overline{y}$ satisfies that  $\Psi(\overline{y}) \in \mathcal{NC}_{\mathcal{Y}}(\overline{y})$, $ \langle \Psi(\overline{y}), y - \overline{y} \rangle  \geq 0$ for any $y \in \mathcal{Y}_i$. Thus, suppose that $\bar{y}_{i, [0,T]} \to  y^{\prime}$ such that $\Psi (y^{\prime}) \not\in \mathcal{NC}_{\mathcal{Y}}(y^{\prime})$, then, we have a contradiction.

 ii):   When the game is $\mu$-strongly monotone, 
 \begin{align*}
   & \quad \ \frac{\mathrm{d}}{\mathrm{d}t} V(x(t)) = \sum_{i \in \mathcal{N}}\frac{\mathrm{d}}{\mathrm{d}t}V_i(x_i(t)) \\ &
    = \sum_{i \in \mathcal{N}}  \langle  \nabla_i \psi_i(\overline{y}),\overline{y}_i - y_i (t) \rangle  \\ & \quad - \sum_{i \in \mathcal{N}}  \langle  \nabla_i \psi_i (y(t)) -  \nabla_i \psi_i (\overline{y}), y_i (t) - \overline{y}_i\rangle  
    \\ & \leq  - \sum_{i \in \mathcal{N}}  \langle  \nabla_i \psi_i (y(t)) -  \nabla_i \psi_i (\overline{y}), y_i (t) - \overline{y}_i\rangle 
    \\ & \leq  - \mu  D_{\phi} ( \overline{y} , y (t)) = - \mu \sum_{i \in \mathcal{N}} D_{\phi_i} (\overline{y}_i , y_i(t)), 
 \end{align*} 
 where the first inequality is due to variational stability, the second inequality is due to $\mu$-strongly monotonicity. 
 By Gronwall's inequality, one can obtain exponential stability:
  \begin{equation*}
       V (x(t)) \leq  \int_{0}^t -\mu V( x(\tau ))  \mathrm{d}\tau  \leq e^{ - \mu t} V(x_0) .
  \end{equation*}

\end{proof}

\subsection{Example: Cournot Duopoly}
Consider the following 2-person multi-dimensional Cournot game. The cost for firm $i =1, 2$ for producing $y_i \in \mathbb{R}^n$ units is $p_i^{\top} y_i$ where $p_i$ are the cost vectors for firm $i$. 
The total output of the firms is $Y = y_1 + y_2 \in \mathbb{R}^n$, the market price is $P =  \max \{M - Y , 0\} \in \mathbb{R}^n$, assuming $ p_i \succ   M \in \mathbb{R}^n$  for $i = 1,2$. Then, the profit $\psi_i(y_i, y_{-i})$ for firms $i=1,2$ are $ \psi_i(y_i, y_{-i}) : = - ( P - p_i )^{\top}y_{i}$. 

This game is a monotone game due to the monotonicity of $\Psi(y) := \mathrm{col}\left( ( M - 2 y_i  - y_{-i} - p_i )_{i = 1, 2} \right)$. The Nash equilibrium strategies are $\overline{y}_i = \frac{1}{3} ( M + p_{-i} - 2 p_i)$ for firms $i=1,2$, where the market price vector is $P = \frac{1}{3}(M + p_i + p_{-i})$.

Picking quadratic mirror maps $\phi_i (y_i) = \frac{1}{2} \langle y_i  , A_i y_i \rangle$ where $A_i$ are positive definite symmetric matrices in $\mathbb{R}^{n\times n}$, $\nabla \phi_i (y_i) =  A_i y_i $, $\nabla \phi^{-1} (x_i) = A_i^{-1}x_i  \equiv \nabla \phi^*_i (x_i)$. We have $\overline{x}_i  =  \frac{1}{3} \langle A_i,  M -2 p_i + p_{-i} \rangle$, $D_{\phi^*_i} (x_i, \bar{x}_i)= \| x_i - \bar{x}_i\|^2_{A_i^{-1}} + \frac{1}{2} \|\bar{x}_i\|^2_{A^{-1}_i} $. 
Hence we can create a general-sum linear-quadratic game via a similar procedure, where two firms are trying to minimize their own cost functional $J_i(x_0, u_1, u_2)$ starting from arbitrary dual state $x_0$. However, here we can anticipate the optimal strategies for both firms to be found through the mirror play without explicitly solving the coupled Ricatti equations.

\section{Stochastic Mirror Play Case}

 When each player $i \in \mathcal{N}$ only has access to a ``black box" model of $\psi_i$, e.g., a first-order \textit{orcacle}, the information may be noisy subject to 1) the measurement/transmission errors; 2) the fluctuation of cost functionals.

As a motivating example, consider Generative Adversarial networks (GAN). 
 Let $Z \sim  \boldsymbol{\mathcal{N}} (0, \mathbf{I})$ and $X \sim \boldsymbol{\mathcal{N}}(\mu,  \mathbf{I})$ be two $n$-dimensional Gaussian random vectors, with $\mu = \mathbb{E} (X)$. The goal of GAN is to learn a \textit{generator network} $G_{y_1} : \mathbb{R}^{n_1} \to \mathbb{R}^n$, and a \textit{discriminator network} $D_{y_2} : \mathbb{R}^{n_2} \to \mathbb{R}$ parameterized by ${y_1} \in \mathbb{R}^{n_1}$ and $y_2 \in \mathbb{R}^{n_2}$ respectively, such that the generator produces samples, $\mathbb{E}[G_{y_1} (Z)] = \mathbb{E} [ X ]$, which cannot be distinguished by the discriminator in average. Literature has shown that GAN essentially requires solving a zero-sum game \cite{daskalakis2017training}:  $\min _{{y_1} \in \mathbb{R}^{n_1}} \max _{y_2 \in \mathbb{R}^{n_2}} \mathbb{E}\left(D_{y_2}(X)\right)-\mathbb{E}\left(D_{y_2} \left(G_{{y_1}}(Z)\right)\right) .
$
However, since $\mu$ is unknown the players can only get noisy feedback from the distribution $\mathcal{N}(\mu, \mathbf{I})$ to evaluate the expectation.

Hence, we study stochastic mirror play within the \textit{stochastic differential game} framework. Let $(\Omega, \mathcal{F}, (\mathcal{F}_t)_{ 0 \leq t \leq T}, \mathbb{P})$ be the filtered probability space, and let $(W_t)_{t \geq 0}$ be the standard $n$-dimensional Wiener process adapted to the (right-continuous) filtration $(\mathcal{F}_t)_{ 0 \leq t \leq T}$.
The (stacked) MP dynamics under such an information pattern can be thus formulated as a \textit{stochastic differential equation} (SDE). 
\begin{equation*}
\begin{aligned}
      \mathrm{d} X_t & = \gamma (t, X_t) \mathrm{d}t + \sigma(X_t, t) \mathrm{d}W_t, \quad X_0 = x_0,  \\
      Y_t  & = \Phi^*(X_t), \quad \quad \quad t \in [0,T],
\end{aligned}
\end{equation*}
where $\sigma(t, \cdot): [0,T] \times  \mathcal{X} \to \mathbb{R}^{n \times n}$ is the volatility matrix.
When the admissible control $ \gamma (t, \cdot)$ and $\sigma (t, \cdot)$ are both Lipschitz-continuous in $x$, the SDE admits a unique strong solution, i.e., $X_t = x_0 + \int_{0}^t u(s, X_s) ds + \int_{0}^t \sigma (s, X_s) ds$ such that $\gamma (t, X_t) = \mathrm{col}(\gamma_i(t, X_t))_{i \in \mathcal{N}}$ are $\mathcal{F}_t$-measurable. Similarly, the expected cost-to-go functionals are:
\begin{equation*}
      J_i (t, x_0, u) := \mathbb{E}[\int_{s = t}^T c_i( X_s, u(s, X_s ) ) ds + q_i (X_T ) ] \quad i \in \mathcal{N}. 
\end{equation*}

We consider a setting where the volatility matrix is scaled by the inverse of diagonally stacked Hessians $\sigma \sigma^{\top} (t, X_t) :=  \sqrt{2 \varepsilon}\mathrm{diag} \left(\nabla^2 \phi_i^* ( X_{t,i} )  \right)^{-1}$, where $\varepsilon$ is a scaling factor. This setting resembles the scenario where every player follows the so-called \textit{Hessian Reimannian dynamics} \cite{alvarez2004hessian}. 
The diagonal matrix implies the independence of feedback noises between players.

\begin{prop}
     When the players only have access to noisy information feedback, 
     an $N$-tuple of feedback strategies $\{ \gamma^*_i; i \in \mathcal{N}\}$ provides a CLE solution if there exist suitably smooth functions $V_i : [0,T ] \times \mathbb{R}^n \to \mathbb{R}$, $ i \in \mathcal{N}$ satisfying the coupled semi-linear parabolic partial differential equations (HJB):
     \begin{equation}\label{eq:stochasticcasepde}
        \begin{aligned}
               - \frac{\partial V_i}{\partial t} & =  \min_{u_i \in \mathbb{R}^{n_i}} \left\{ \mathcal{L}^{(u_i, \gamma_{-i}^*)} V_i (t, x) + c_i(x, u_i, \gamma^*_{-i}) \right\}
              \\ V_i(T, x) & = q_i ( x ),  \quad \quad i \in \mathcal{N}  .
        \end{aligned}
     \end{equation}
     where $\mathcal{L}^{u}$ are second-order partial differential operators with respect to $u \in \mathbb{R}^{n}$, i.e., for any twice differentiable function $f$ in $\mathbb{R}^{n}$, $
           [\mathcal{L}^{u} f ] \triangleq  \frac{1}{2} \mathrm{trace}( \sigma \sigma^{\top} \nabla^2 f)  + \langle \nabla f, u \rangle$.
\end{prop}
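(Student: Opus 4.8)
The plan is to establish a stochastic verification theorem, in the spirit of the deterministic argument behind Thm.~\ref{thm:main}, but carried out player-by-player via It\^o's formula in place of the fundamental theorem of calculus. The claim is a \emph{best-response} characterization: fixing the opponents at their equilibrium policies $\gamma^*_{-i}$, the proposed $\gamma^*_i$ minimizes player $i$'s expected cost-to-go, which is precisely the CLE condition. Accordingly, I would verify, for each $i \in \mathcal{N}$ and every admissible deviation $u_i$, that $V_i(t,x) \le J_i(t,x,u_i,\gamma^*_{-i})$, with equality at $u_i = \gamma^*_i$.

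First I would fix player $i$ and freeze the opponents at $\gamma^*_{-i}$, so that under the profile $(u_i,\gamma^*_{-i})$ the state obeys the closed SDE $\mathrm{d}X_s = (u_i,\gamma^*_{-i})(s,X_s)\,\mathrm{d}s + \sigma(s,X_s)\,\mathrm{d}W_s$, which admits a unique strong solution under the stated Lipschitz hypotheses. Applying It\^o's formula to $s \mapsto V_i(s,X_s)$ on $[t,T]$ and using $V_i(T,\cdot)=q_i$ yields
\begin{equation*}
q_i(X_T) - V_i(t,x) = \int_t^T \Big( \partial_s V_i + \mathcal{L}^{(u_i,\gamma^*_{-i})} V_i \Big)(s,X_s)\,\mathrm{d}s + \int_t^T \langle \nabla_x V_i, \sigma\,\mathrm{d}W_s \rangle .
\end{equation*}

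Next I would take expectations, arguing that the It\^o integral is a true martingale with zero mean (the integrability step below), and then invoke the HJB relation \eqref{eq:stochasticcasepde}. Since $-\partial_t V_i = \min_{u_i}\{ \mathcal{L}^{(u_i,\gamma^*_{-i})} V_i + c_i \}$, for \emph{any} $u_i$ the pointwise inequality $\partial_s V_i + \mathcal{L}^{(u_i,\gamma^*_{-i})} V_i \ge -c_i(x,u_i,\gamma^*_{-i})$ holds, turning the displayed identity into $\E[q_i(X_T)] - V_i(t,x) \ge -\E \int_t^T c_i\,\mathrm{d}s$, i.e.\ $V_i(t,x) \le J_i(t,x,u_i,\gamma^*_{-i})$. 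Choosing $u_i = \gamma^*_i$ attains the minimizer in the HJB, so the inequality collapses to equality $V_i(t,x) = J_i(t,x,\gamma^*_i,\gamma^*_{-i})$; combining the two establishes the best-response/CLE property and identifies $V_i(t,\cdot)$ as the equilibrium value.

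The main obstacle is the integrability and regularity bookkeeping rather than the algebra. Two points need care: (a) that $V_i$ is $C^{1,2}$ so It\^o applies, which is exactly the ``suitably smooth'' hypothesis and, for the Hessian--Riemannian volatility $\sigma\sigma^\top = \sqrt{2\varepsilon}\,\mathrm{diag}(\nabla^2\phi_i^*)^{-1}$, should follow from the Legendre regularity of $\phi_i$; and (b) that $\E \int_t^T \| \sigma^\top \nabla_x V_i \|^2\,\mathrm{d}s < \infty$, so the stochastic integral has zero expectation, which I would secure from the Lipschitz/growth bounds on $\sigma$ together with the explicit Bregman form of $V_i$. A minor additional point is existence of the pointwise minimizer $\gamma^*_i$ in \eqref{eq:stochasticcasepde}; as in Lemma~\ref{lemma:nonneg}, the relevant integrand reduces to the Fenchel coupling $\mathcal{FC}$, whose unique minimizer is the (stochastic) mirror-descent update, so the $\min$ is attained and the stochastic verification remains consistent with the deterministic case.
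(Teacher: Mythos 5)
The paper states this proposition without any proof, treating it as the classical stochastic verification theorem for closed-loop Nash equilibria in diffusion-driven differential games. Your argument---freezing the opponents at $\gamma^*_{-i}$, applying It\^o's formula to $V_i(s,X_s)$, killing the martingale term in expectation, and invoking the pointwise HJB inequality to get $V_i(t,x)\le J_i(t,x,u_i,\gamma^*_{-i})$ with equality at the minimizer---is exactly the standard proof of that result and is correct; the $C^{1,2}$ and integrability caveats you flag are precisely what the paper's ``suitably smooth'' hypothesis is meant to absorb.
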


We are still bound to construct value functions such that the boundary conditions $V_i ( T, x) = D_{\phi^*_i}( x_i, \overline{x}_i) $ are satisfied, with a small difference in that the cost to go at $t$ shall also consist of the quadratic variation.

\begin{theorem} \label{thm:stochasticcase}
  Suppose the partial gradients $\nabla_i \psi_i$ and Hessians $\nabla^2 \phi^*_i$ are Lipschitz-continuous in $x_i$, the following value functions 
     \begin{equation}
         V_i( t, x ) =  D_{\phi^*_i} (x_i (t), \overline{x}_i ) + \varepsilon n_i (T - t)
     \end{equation}
     satisfy the HJB equations \eqref{eq:stochasticcasepde}. 
     The set of feedback strategies $\{ \gamma^*_i (t,x ) \equiv u^*_i(t) =  - \nabla_i \psi_i (x_i, x_{-i}) \}$ provides a Nash equilibrium solution to the differential game. 
\end{theorem}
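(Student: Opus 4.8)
The plan is to verify directly that the proposed value functions solve the HJB system \eqref{eq:stochasticcasepde} and then to upgrade this into the Nash/CLE property through a stochastic verification argument. The essential structural observation is that the right-hand side of \eqref{eq:stochasticcasepde} splits into a second-order (diffusion) part that, by the deliberate choice of volatility, turns out to be a state- and control-independent constant, plus a first-order (drift-plus-stage-cost) part that coincides exactly with the expression already treated in Lemma~\ref{lemma:nonneg}.

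First I would compute the derivatives of $V_i(t,x) = D_{\phi_i^*}(x_i,\overline{x}_i) + \varepsilon n_i (T-t)$. Since the spatial part depends only on the block $x_i$, we obtain $\partial_t V_i = -\varepsilon n_i$, the gradient $\nabla_x V_i = [\,0,\ldots,(\nabla\phi_i^*(x_i) - \overline{y}_i)^\top,\ldots,0\,]^\top$ (using $\overline{y}_i = \nabla\phi_i^*(\overline{x}_i)$), and a Hessian $\nabla^2_x V_i$ whose only nonzero block is $\nabla^2\phi_i^*(x_i)$ in the $(i,i)$ position. The crux is the trace term: because the block-diagonal volatility is chosen so that the $j$-th block of $\sigma\sigma^\top$ is proportional to $(\nabla^2\phi_j^*(x_j))^{-1}$, the product $\sigma\sigma^\top\nabla^2_x V_i$ has $(i,i)$-block equal to a scalar multiple of $I_{n_i}$ and vanishes elsewhere, so $\tfrac12\mathrm{trace}(\sigma\sigma^\top\nabla^2 V_i) = \varepsilon n_i$, independent of $x$ and $u_i$. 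This is precisely the quadratic-variation correction that the term $\varepsilon n_i(T-t)$ is designed to absorb.

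With the diffusion term factored out as a constant, the minimization in \eqref{eq:stochasticcasepde} collapses to $\min_{u_i}\{\langle\nabla_x V_i,(u_i,\gamma_{-i}^*)\rangle + c_i(x,u_i,\gamma_{-i}^*)\}$, which is exactly the quantity bounded in Lemma~\ref{lemma:nonneg}. That lemma gives the value $0$, attained uniquely at $u_i^* = -\nabla_i\psi_i(\nabla\phi_i^*(x_i),\nabla\phi_{-i}^*(x_{-i}))$. Hence the right-hand side of \eqref{eq:stochasticcasepde} equals $\varepsilon n_i + 0 = \varepsilon n_i = -\partial_t V_i$, and the terminal condition $V_i(T,x) = D_{\phi_i^*}(x_i,\overline{x}_i) = q_i(x)$ holds since the correction vanishes at $t=T$. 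This establishes both that $V_i$ solves the HJB and that the minimizing feedback is the stochastic mirror-descent policy.

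Finally, to conclude the Nash property I would invoke the standard stochastic verification theorem. Fixing the opponents at $\gamma_{-i}^*$, applying It\^o's formula to $V_i(s,X_s)$ over $[t,T]$, and taking expectations so the martingale term $\int_t^T\langle\nabla_x V_i,\sigma\,\mathrm{d}W_s\rangle$ drops out, the HJB inequality $\partial_s V_i + \mathcal{L}^{(u_i,\gamma_{-i}^*)}V_i + c_i \geq 0$ (with equality at $\gamma_i^*$) yields $V_i(t,x)\leq J_i(t,x,u_i,\gamma_{-i}^*)$ for every admissible $u_i$, with equality at $u_i=\gamma_i^*$, so no unilateral deviation is profitable. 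I expect the main obstacle to be the technical justification of this last step rather than the algebra: I must ensure $V_i\in C^{1,2}$ (which follows from $\phi_i^*$ being Legendre together with the assumed Lipschitz regularity of $\nabla^2\phi_i^*$), that the Lipschitz hypotheses on $\nabla_i\psi_i$ and $\nabla^2\phi_i^*$ guarantee a unique strong solution of the feedback-controlled SDE, and that the associated moment bounds make the stochastic integral a genuine martingale so its expectation indeed vanishes.
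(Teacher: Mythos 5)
Your proposal is correct and follows essentially the same route as the paper: compute the derivatives of $V_i$, observe that the choice of volatility makes the trace term the constant $\varepsilon n_i$ cancelling $\partial_t V_i$, and reduce the remaining minimization to the Fenchel-coupling identity of Lemma~\ref{lemma:nonneg}. The only difference is that you spell out the stochastic verification step (It\^o's formula, vanishing martingale term, integrability) that the paper delegates to its HJB sufficiency proposition, which is a welcome but not substantively different addition.
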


\begin{proof}
 By speculation, it can be verified that
\begin{align*}
     \frac{\partial V_i (t,x)}{ \partial t} & =  \varepsilon n_i,  \\
    \nabla V_i (t,x) & =  [  0 , \ldots, \underbrace{ (\nabla \phi^*_i (x_i) - \nabla \phi^*_i (\overline{x}_i))^{\top} }_{\text{the }i' \text{th stack}},  \ldots , 0]^{\top} ,
    \end{align*}
    and \begin{align*}
     \nabla^2 V_i (t,x) & = \begin{pmatrix}
        0 & \cdots & 0 & \cdots & 0 \\
\vdots & \ddots & \vdots & & \vdots \\
0 & \cdots & \underbrace{\nabla^2 \phi^*_i (x_i) }_{ \text{the } i \times i' \text{th block}}& \cdots & 0 \\
\vdots &  & \vdots & \ddots & \vdots \\
0 & \cdots & 0 & \cdots & 0
     \end{pmatrix} .  
\end{align*}
Thus, the summation   
\begin{align*}
    & \quad  \frac{\partial V_i (t,x)}{ \partial t}+  \mathcal{L}^{u} V_i (t,x) + c_i (x,u) \\
    & = - \varepsilon n_i  +  (\nabla \phi^*_i (x_i)  - \nabla \phi^*_i (\overline{x}_i ))^{\top} u_i  \\
    & + \mathrm{trace}((\nabla^2 \phi^*_i (x_i))^{-1}\nabla^2 \phi^*_i (x_i)) + c_i(x, u ) \\
     &  =  \psi_i( \nabla \phi^*_i(x_i), \nabla \phi^*_{-i}(x_{-i})) + \psi^*_i(-u_i) + \langle \nabla \phi^*_i(\overline{x}_i), u_i \rangle, 
\end{align*}
 which is independent from $u_{-i}$, is minimized if.f. $ - u^*_i =  \nabla_i \psi_i (x_i, x_{-i})$ according to Lemma \ref{lemma:nonneg}. 
 Hence, by the assumption on $\nabla_i \psi_i$ and $\nabla^2 \phi^*_i$, the  SDE admits a unique strong solution and $u^*_i$ is a valid admissible control.
\end{proof}

We can also show similar finite-time quantification results on the variational stability through a similar procedure from the proof of Theorem \ref{thm:main}. 
\begin{prop}
    Let $\{(X_t, Y_t); 0 \leq t \leq T\}$ be generated by the CLE policy $\{ u^*(t); 0 \leq t \leq T\}$ with boundary condition $(x_0, y_0)$. Then, it holds that 
    \begin{equation}\label{eq:stoquan}
         \mathbb{E} \left[ \langle  \Psi (\overline{y}),  \tilde{Y}_T - \overline{y} \rangle \right] \leq \frac{1}{T} \sum_{ i \in \mathcal{N}} D_{\phi^*_i} (x_{0,i}, \overline{x}_i) +  \varepsilon n ,
    \end{equation}
    where $\tilde{Y}_T = \int_{t= 0}^T Y_t  \mathrm{d}t$ is a time average.
    Further, when the game is $\mu$-strongly monotone w.r.t. $D_{\phi}$, it holds that
    \begin{equation}
        \mathbb{E} \left[ D_{\phi} ( \overline{y}, Y_t ) \right] \leq  e^{-\mu t} D_{\phi} (\overline{y}, y_0) + (1 - e^{-\mu t}) \frac{ \varepsilon n }{\mu}. 
    \end{equation}
\end{prop}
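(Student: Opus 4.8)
The plan is to run the stochastic analogue of the Lyapunov computation behind Theorem~\ref{thm:main} and Theorem~\ref{thm:finite-time-lyapunov}, replacing the ordinary chain rule with It\^o's formula. The central object is the Bregman part of the value function, $\Lambda(x) := \sum_{i\in\mathcal{N}} D_{\phi^*_i}(x_i, \overline{x}_i)$. By the duality relation from the Preliminaries, $D_{\phi^*_i}(X_{t,i}, \overline{x}_i) = D_{\phi_i}(\overline{y}_i, Y_{t,i})$ whenever $Y_{t,i} = \nabla\phi^*_i(X_{t,i})$ and $\overline{y}_i = \nabla\phi^*_i(\overline{x}_i)$, so $\Lambda(X_t) = D_{\phi}(\overline{y}, Y_t)$, which is exactly the quantity both inequalities control. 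The Hessian-Lipschitz hypothesis of Theorem~\ref{thm:stochasticcase} guarantees $\Lambda \in C^2$, so It\^o's formula applies along the strong solution $X_t$ of the closed-loop SDE driven by $u^*_i = -\nabla_i\psi_i(Y_t)$.

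First I would compute the generator. The gradient decouples across players, $\nabla_{x_i}\Lambda = \nabla\phi^*_i(X_{t,i}) - \nabla\phi^*_i(\overline{x}_i) = Y_{t,i} - \overline{y}_i$, and $\nabla^2\Lambda$ is block-diagonal with blocks $\nabla^2\phi^*_i(X_{t,i})$. The design choice $\sigma\sigma^\top = 2\varepsilon\,\mathrm{diag}(\nabla^2\phi^*_i(X_{t,i})^{-1})$ collapses the second-order term exactly as in the proof of Theorem~\ref{thm:stochasticcase}: $\tfrac{1}{2}\mathrm{trace}(\sigma\sigma^\top\nabla^2\Lambda) = \varepsilon\sum_{i\in\mathcal{N}}\mathrm{trace}(I_{n_i}) = \varepsilon n$. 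Hence It\^o's formula gives
\begin{equation*}
  \mathrm{d}\Lambda(X_t) = \Big[-\sum_{i\in\mathcal{N}}\langle\nabla_i\psi_i(Y_t), Y_{t,i}-\overline{y}_i\rangle + \varepsilon n\Big]\,\mathrm{d}t + \langle\nabla\Lambda, \sigma\,\mathrm{d}W_t\rangle,
\end{equation*}
which is the deterministic identity \eqref{eq:c+dv} plus the constant diffusion contribution $\varepsilon n$. Taking expectations annihilates the stochastic integral (addressed below), leaving $\tfrac{\mathrm{d}}{\mathrm{d}t}\mathbb{E}[\Lambda(X_t)] = -\mathbb{E}\big[\sum_{i\in\mathcal{N}}\langle\nabla_i\psi_i(Y_t), Y_{t,i}-\overline{y}_i\rangle\big] + \varepsilon n$.

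For \eqref{eq:stoquan} I would integrate this over $[0,T]$ and reuse the monotonicity splitting of Theorem~\ref{thm:finite-time-lyapunov}: writing $\langle\nabla_i\psi_i(Y_t), Y_{t,i}-\overline{y}_i\rangle = \langle\nabla_i\psi_i(\overline{y}), Y_{t,i}-\overline{y}_i\rangle + \langle\nabla_i\psi_i(Y_t)-\nabla_i\psi_i(\overline{y}), Y_{t,i}-\overline{y}_i\rangle$ and summing, the second group aggregates to $\langle\Psi(Y_t)-\Psi(\overline{y}), Y_t-\overline{y}\rangle\geq 0$ by monotonicity of $\mathcal{G}$. Discarding $\mathbb{E}[\Lambda(X_T)]\geq 0$ and dividing by $T$ gives $\tfrac{1}{T}\int_0^T\mathbb{E}[\langle\Psi(\overline{y}), Y_t-\overline{y}\rangle]\,\mathrm{d}t \leq \tfrac{1}{T}\Lambda(x_0) + \varepsilon n$; Fubini then pulls the time-average inside the pairing as $\mathbb{E}[\langle\Psi(\overline{y}), \tilde{Y}_T-\overline{y}\rangle]$, which is \eqref{eq:stoquan}. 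For the exponential bound I instead retain the drift: split off $\langle\Psi(\overline{y}), Y_t-\overline{y}\rangle\geq 0$ by variational stability ($\Psi(\overline{y})\in\mathcal{NC}_{\mathcal{Y}}(\overline{y})$ and $Y_t\in\mathcal{Y}$), and bound the remainder below by $\mu D_\phi(\overline{y}, Y_t)=\mu\Lambda(X_t)$ via $\mu$-strong monotonicity. This yields the scalar linear inequality $\tfrac{\mathrm{d}}{\mathrm{d}t}\mathbb{E}[\Lambda(X_t)]\leq -\mu\,\mathbb{E}[\Lambda(X_t)]+\varepsilon n$, whose Gronwall integration produces $\mathbb{E}[D_\phi(\overline{y}, Y_t)]\leq e^{-\mu t}D_\phi(\overline{y}, y_0)+(1-e^{-\mu t})\tfrac{\varepsilon n}{\mu}$.

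The main obstacle is to certify $\mathbb{E}\big[\int_0^t\langle\nabla\Lambda, \sigma\,\mathrm{d}W_s\rangle\big]=0$, i.e., that the local martingale is genuinely a martingale. This requires the square-integrability $\mathbb{E}\big[\int_0^T\|\sigma^\top(Y_s-\overline{y})\|^2\,\mathrm{d}s\big]<\infty$, which is where the Lipschitz hypotheses on $\nabla_i\psi_i$ and $\nabla^2\phi^*_i$, already invoked in Theorem~\ref{thm:stochasticcase} to secure a unique strong solution of controlled growth, carry the weight; a standard stopping-time localization, or a moment bound on $\sup_{s\leq T}\|X_s\|$, then removes the technicality and lets me pass the limit through the expectation.
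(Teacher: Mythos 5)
Your proposal is correct and follows essentially the same route as the paper's proof sketch: an It\^o/Lyapunov computation on $\sum_{i\in\mathcal{N}} D_{\phi^*_i}(X_{i,t},\overline{x}_i)=D_{\phi}(\overline{y},Y_t)$, where the chosen volatility cancels the Hessian to leave the constant drift $\varepsilon n$, followed by the monotonicity splitting with Fubini for the time-average bound and by Gronwall for the exponential bound. If anything, you are more careful than the paper in justifying that the It\^o integral has zero expectation, which the paper's sketch simply asserts.
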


\begin{customproof}{Proof Sketch}
Let $V(t,x) = \sum_{i \in \mathcal{N}} V_i (t, x )$,
a similar calculation yields 
\begin{equation*}
    \begin{aligned}
        V(0, x_0)& = \varepsilon n T +  \sum_{ i \in \mathcal{N}} D_{\phi^*_i} (x_{0,i}, \overline{x}_i)  \\
        & = \mathbb{E} \left[ \int_{0}^T c_i(X_t, u^*_t)\mathrm{d}t + \sum_{i\in\mathcal{N}}D_{\phi^*_i}(X_{i,T}, \overline{x}_i)  \right]  \\
        & \geq \mathbb{E} \left[ \int_{t=0}^T \sum_{ i \in \mathcal{N}} \langle \nabla_i \psi_i (\overline{y}), Y_{i,t} - \overline{y}_i \rangle \mathrm{d}t \right].
    \end{aligned}
\end{equation*}
Dividing both sides by $T$ and applying Fubini's theorem gives \eqref{eq:stoquan}. 
Since $D_{\phi}(\overline{y}, Y_t) = \sum_{i \in \mathcal{N}} D_{\phi^*_i} (X_{i,t}, \overline{x}_i)$ is a $\mathcal{F}_t$-adapted Ito's process, leveraging the calculation from the proof of Theorem \ref{thm:main}: 
\begin{align*}
    & \quad d D_{\phi} (\overline{y}, Y_t) \leq  \underbrace{\sum_{i \in \mathcal{N}} \langle \nabla_i \psi_i (\overline{y}) ,  Y_{i,t} - \overline{y}_{i} \rangle \mathrm{d}t}_{\leq 0 \text{ otherwise reach Nash equilibrium}} \\
    & + \frac{1}{2}\mathrm{trace} ( 2 \varepsilon \mathrm{diag} (\nabla^2 V (\nabla^2\phi^*_i(X_t))^{-1}) \mathrm{d}t \quad ( = \varepsilon n)\\
     & - \mu D_{\phi}(\overline{y} , Y_{t})  \mathrm{d}t + 
 \underbrace{  \left(\nabla V \right)^{\top} \sigma (t, X_t) \mathrm{d}W_t }_{ \text{ constitutes Ito's integral}}, 
\end{align*}
where we use Ito's formula in differential form.
Taking expectation on both sides and applying Gronwall's inequality yields the result.
\end{customproof}

\section{Conclusion and Future Works}

In this work, we have interpreted the continuous-time MP dynamics in monotone games as a closed-loop equilibrium path in some specific (stochastic) differential games, referred to as mirror differential games (MDG). The construction of MDG is built on the Breizs-Ekeland variational principle, where the cost functional is determined by the mirror map in MP and the payoff in the monotone game. We provided finite-time characterization for this path, which indicates the equilibrium convergence in the time-average sense, and exponential convergence when the game is strongly monotone.

An immediate future research direction is the non-equilibrium path characterization in the stochastic case, i.e., to study the transient probabilistic behavior of MP dynamics. 
It is also of great interest to look at the effect of varying learning rates and mirror maps, which would pave the way for the design of multi-agent learning dynamics.

\bibliographystyle{ieeetr}
\bibliography{ref}

\begin{thebibliography}{10}

\bibitem{mertikopoulos2019learning}
P.~Mertikopoulos and Z.~Zhou, ``Learning in games with continuous action sets and unknown payoff functions,'' {\em Mathematical Programming}, vol.~173, pp.~465--507, 2019.

\bibitem{nemirovskij1983problem}
A.~S. Nemirovskij and D.~B. Yudin, ``Problem complexity and method efficiency in optimization,'' 1983.

\bibitem{NIPS2006_1cfead99}
S.~Shalev-shwartz and Y.~Singer, ``{Convex Repeated Games and Fenchel Duality},'' in {\em Advances in Neural Information Processing Systems} (B.~Sch{\"{o}}lkopf, J.~Platt, and T.~Hoffman, eds.), vol.~19, MIT Press, 2006.

\bibitem{tao22confluence}
T.~Li, G.~Peng, Q.~Zhu, and T.~Baar, ``{The Confluence of Networks, Games, and Learning a Game-Theoretic Framework for Multiagent Decision Making Over Networks},'' {\em IEEE Control Systems}, vol.~42, no.~4, pp.~35--67, 2022.

\bibitem{pan2023resilience}
Y.~Pan, T.~Li, and Q.~Zhu, ``On the resilience of traffic networks under non-equilibrium learning,'' in {\em 2023 American Control Conference (ACC)}, pp.~3484--3489, IEEE, 2023.

\bibitem{lei18md-hp}
Y.~Lei and K.~Tang, ``{Stochastic Composite Mirror Descent: Optimal Bounds with High Probabilities},'' in {\em Advances in Neural Information Processing Systems}, vol.~31, Curran Associates, Inc., 2018.

\bibitem{pan2023stochastic}
Y.~Pan, T.~Li, and Q.~Zhu, ``Is stochastic mirror descent vulnerable to adversarial delay attacks? a traffic assignment resilience study,'' in {\em 2023 62nd IEEE Conference on Decision and Control (CDC)}, pp.~8328--8333, IEEE, 2023.

\bibitem{brezis1976principe}
H.~Br{\'e}zis and I.~Ekeland, ``Un principe variationnel associ{\'e}a certaines equations paraboliques. le cas independant du temps,'' {\em CR Acad. Sci. Paris S{\'e}r. A}, vol.~282, pp.~971--974, 1976.

\bibitem{tzen2023variational}
B.~Tzen, A.~Raj, M.~Raginsky, and F.~Bach, ``Variational principles for mirror descent and mirror langevin dynamics,'' {\em IEEE Control Systems Letters}, 2023.

\bibitem{zhou2017mirror}
Z.~Zhou, P.~Mertikopoulos, A.~L. Moustakas, N.~Bambos, and P.~Glynn, ``Mirror descent learning in continuous games,'' in {\em 2017 IEEE 56th Annual Conference on Decision and Control (CDC)}, pp.~5776--5783, IEEE, 2017.

\bibitem{mertikopoulos2016learning}
P.~Mertikopoulos and W.~H. Sandholm, ``Learning in games via reinforcement and regularization,'' 2016.

\bibitem{gao2020continuous}
B.~Gao and L.~Pavel, ``Continuous-time discounted mirror descent dynamics in monotone concave games,'' {\em IEEE Transactions on Automatic Control}, vol.~66, no.~11, pp.~5451--5458, 2020.

\bibitem{shutian23erm}
S.~Liu, T.~Li, and Q.~Zhu, ``{Game-Theoretic Distributed Empirical Risk Minimization With Strategic Network Design},'' {\em IEEE Transactions on Signal and Information Processing over Networks}, vol.~9, pp.~542--556, 2023.

\bibitem{gao2022continuous}
B.~Gao and L.~Pavel, ``Continuous-time convergence rates in potential and monotone games,'' {\em SIAM Journal on Control and Optimization}, vol.~60, no.~3, pp.~1712--1731, 2022.

\bibitem{bacsar1998dynamic}
T.~Ba{\c{s}}ar and G.~J. Olsder, {\em Dynamic noncooperative game theory}.
\newblock SIAM, 1998.

\bibitem{facchinei2003finite}
F.~Facchinei and J.-S. Pang, {\em Finite-dimensional variational inequalities and complementarity problems}.
\newblock Springer, 2003.

\bibitem{friesz2010dynamic}
T.~L. Friesz {\em et~al.}, {\em Dynamic optimization and differential games}, vol.~135.
\newblock Springer, 2010.

\bibitem{mertikopoulos2017convergence}
P.~Mertikopoulos and M.~Staudigl, ``On the convergence of gradient-like flows with noisy gradient input,'' 2017.

\bibitem{tao_info}
T.~Li, Y.~Zhao, and Q.~Zhu, ``{The role of information structures in game-theoretic multi-agent learning},'' {\em Annual Reviews in Control}, vol.~53, pp.~296--314, 2022.

\bibitem{krichene2015convergence}
W.~Krichene, S.~Krichene, and A.~Bayen, ``Convergence of mirror descent dynamics in the routing game,'' in {\em 2015 European Control Conference (ECC)}, pp.~569--574, IEEE, 2015.

\bibitem{daskalakis2017training}
C.~Daskalakis, A.~Ilyas, V.~Syrgkanis, and H.~Zeng, ``Training gans with optimism,'' {\em arXiv preprint arXiv:1711.00141}, 2017.

\bibitem{alvarez2004hessian}
F.~Alvarez, J.~Bolte, and O.~Brahic, ``Hessian riemannian gradient flows in convex programming,'' {\em SIAM journal on control and optimization}, vol.~43, no.~2, pp.~477--501, 2004.

\end{thebibliography}

\end{document}